\documentclass[12pt,leqno]{amsart}

\usepackage{amsmath}
\usepackage{amssymb}
\usepackage{a4wide}
\usepackage{enumitem}
\usepackage{graphics}
\usepackage{epsfig}
\usepackage{hyperref}
\hypersetup{%
  colorlinks = true,
  linkcolor  = blue,
  citecolor    = red
}
  \usepackage[applemac]{inputenc}
\usepackage{amsmath, amssymb, latexsym, amscd, amsthm,amsfonts,amstext}
\usepackage[mathscr]{eucal}
\usepackage{appendix}
\parskip = 0.05 in
\usepackage[active]{srcltx}
\include{srctex}

\usepackage{color}

 \usepackage{mleftright}
\usepackage{pgf,tikz}
\usepackage{caption}  
\usepackage{mathrsfs}
\usetikzlibrary{quotes, angles,decorations.pathreplacing}
\usepackage{comment}








\makeatletter
\@addtoreset{equation}{section}
 \makeatother
 
 \usepackage{hyperref}

\newtheorem{theorem}{Theorem}

\newtheorem{proposition}[theorem]{Proposition}

\newtheorem{remark}[theorem]{Remark}






\makeatletter
 \makeatother
 

\title[semiclassical asymptotics]
{A Gaussian Beam Construction of de  Haas-van Alfven Resonances }

\begin{document}

\author[M. Dimassi, J.-C. Guillot   \;and \; J. Ralston]{Mouez Dimassi, Jean-Claude Guillot  and James Ralston}
\address{Mouez Dimassi, IMB (UMR-CNRS 5251), UNIVERSIT\'E DE BORDEAUX, 351 COURS DE LA LIB\'ERATION,
33405 TALENCE CEDEX, FRANCE}
\email{mdimassi@u-bordeaux.fr}
\address{{ J.-C. Guillot, UNIVERSIT\'E  D'AIX-MARSEILLE, CNRS, INSTITUT DE MATH\'EMATIQUES DE MARSEILLE, UMR 7373, 13453 MARSEILLE CEDEX 13, FRANCE} }
\email{{ jcguillot@math.cnrs.fr}}
\address{J. Ralston UNIVERSITY OF CALIFORNIA, LOS ANGELES,   CA 90095, USA}
\email{ralston@math.ucla.edu}

\keywords{Schr\"odinger operator,  de Haas-van Alfven oscillations, gaussian beam construction,
magnetic field, generalized Onsager relation}
\subjclass[2000]{81Q10 (35P20 47A55 47N50 81Q15)}



\maketitle
\begin{abstract} 
In this article  we consider a Bloch electron in a crystal lattice subject to slowy varying external magnetic fields. We offer an explanation of de Haas-van Alfven oscillations in terms of
energy levels of approximate eigenfunctions for the magnetic Schr\"odinger operator by using a gaussian beam construction for a small enough magnetic field.

 
 
 
 

\end{abstract}

\section{Introduction}

The quantum dynamics of a Bloch electron in a crystal subject to external  {  { constant magnetic field $\nabla\times A$}} is governed by the Schr\"odinger equation
{ \begin{equation}\tag{1}\label{EQ00}
{ (P-E_0)u:= \Big[{\hbar^2\over 2m}(D_x+\mu A(x))^2 + eV({x})-E_0\Big] u =0},  \,\, D_x=\frac{1}{i}\partial_x,
\end{equation}}
where $V$  is a smooth, real-valued potential, periodic with respect to  a lattice $\Gamma=\oplus_{i=1}^3\mathbb Z a_i$ in $\mathbb R^3$. Here $(a_1,a_2,a_3)$ is a basis of $\mathbb R^3$,  $m$  and $e$  are the mass and charge of the electron, and { $\mu \hbar  =e$}. The magnetic potential $A(x)=(0, \epsilon x_1, 0)$ corresponding to the constant magnetic field 
$\nabla \times A = (0, 0, \epsilon)$. In this paper we will treat the magnetic field strength $\epsilon$ as a small parameter and use the scaled variable $y=\epsilon x$ and the potential $A(y) = (0, y_1, 0)$.


In the semi-classical dynamics of Bloch electrons under slowly varying electric and magnetic fields, recent advances have been made  (see \cite{BR, Bu, DGR1, SN} and the references given there).
Since the work of Peierls \cite{Pe}  and Slater \cite{Sl}, it is well known that, if $\epsilon$ is sufficiently small, the wave packets
   are propagating along  the trajectories from  the semi-classical Hamiltonian { $H(y,p)=E(p+\mu A(y))$, $y=\epsilon x$}. Here $E(k)$ is one of the band functions describing the Floquet spectrum of the unperturbed Hamiltonian  $-\frac{\hbar^2}{ 2m}\Delta+eV(x).$ {  In order $\epsilon$, the semiclassical quantization condition for magnetic levels (well-known Onsager relation), contains two phases :
One is the Berry's phase,  and the other is known as the Wilkinson-Rammal  phase (see \cite{DGR1,SN}). }

The orbit of the full  classical Hamiltonian  $H(y,p)$  is  helical  and cannot be quantized. Its projection on the pseudo-momentum coordinate $k=p+\mu A(y)$ lies in   the intersection of the Fermi surface $\{E(k)=E_0\}$  with the plane where  $k_3$ is constant. Under the assumption that this intersection  is   a  simple closed curve, 
 the electron's motion that is perpendicular to the magnetic field is quantized. In order to examine the generalized Onsager relation, we will employ the reduced Peierls classical Hamiltonian, $\hat H(y_1, y_2, p_1, p_2):=E(p_1,p_2+\mu y_1, k_3)$, where $k_3$ is constant and   the variable  $y_3$ is deleted (for more information, consult  sections 2-3).


\vskip.2in
In this work, we  use the gaussian beam (GB) to construct for $\epsilon$ small enough asymptotic solutions of \eqref{EQ00} concentrated in a tube of radius $\epsilon^{1/2}$ around the curve  $\Pi_{\hat y}{\hat \gamma}$ which is traced by $\hat y(s)=(y_1(s), y_2(s))$. Here, $\hat \gamma:=\{(\hat y(s), \hat p(s)), s\in [0, T]\}$ is a periodic  
 trajectory for the reduced  Hamiltonian $\hat H(\hat y, \hat p)=E(p_1,p_2+\mu y_1, k_3)$, (see Proposition \ref{Prop 1} and Theorem \ref{Prin1}).
 
 In section 4, we use these approximate asymptotic solutions to study the  generalized semiclassical quantization condition for cyclotron orbits.
Due to the wave function being single-valued along a closed orbit, the quantization condition  including the
Berry, Wilkinson-Rammal (WR),  and Maslov phases is established in Theorem \ref{Prin2}. It should be noted that  the Berry phase and WR phase both involve cell-periodic Bloch wave functions and cannot be derived from the zero-field energy spectrum alone. The geometry of the Bloch states has a crucial impact on the phase of magnetic oscillations.
 
When $\hat \gamma$ is a  stable periodic orbit for the bicharacteristic flow,  GB is  well-understood for a large class of partial differential equations (see \cite{Ba, LRT, R2}). In our case,  the curve $\hat\gamma$ is unstable (i.e., all the eigenvalues of  the linearized Poincare  are real). The construction of quasi-modes in \cite{Ba, R2} and elsewhere does not allow this. To remove this difficulty, we will examine and adapt more closely Ralston's approach to our case.


 

\vskip.2in
\section{Preliminaries}

\subsection* {\bf Equations of motion in Physical and Pseudo-momentum spaces}

Let $E_n(k)$
 be  one of the band functions describing the Floquet spectrum of the unperturbed Hamiltonian:
$$H_0(k)={\hbar^2\over 2m} (D_x+k)^2+eV(x) : L^2(\mathbb T) \rightarrow L^2(\mathbb T).$$
Let 
$\Phi_n(\cdot,k)=e^{-ix\cdot k} \Psi_n(x,k)$ be the corresponding  normalized eigenfunction, 
  \begin{equation}\tag{2}\label{EF0}
 \Big[H_0(k)-E_n(k)\Big] \Phi_n(x,k)=0,\,\,\, \int_{\mathbb  T} \vert \Phi_n(x,k)\vert^2dx=1,
 \end{equation}
 where $\Psi_n(\cdot,k)$ is  the Bloch function    associated to $E_n(k)$ :
 $$\Psi_n(x+\gamma,k)=e^{ik\cdot \gamma}\Psi_n(x,k),\, \forall \gamma \in \Gamma.$$ 
 Since  $e^{-ix\cdot \gamma^*} H_0(k) e^{ix\cdot \gamma^*}=H_0( k+\gamma^*)$, it follows that
  that $$E_n(k+\gamma^*)=E_n(k),\,\,\,\,\, \text  { for all } \gamma^*\in \Gamma^*,$$
 where $\Gamma^*$ is  the  reciprocal lattice. {Standard perturbation theory shows that the function $E_n(k)$ is continuous for $k\in \mathbb R^3$ and real analytic
in a neighborhood of any $k$ such that $E_n(k)$ is a simple eigenvalue, i.e., 
$$E_{n-1}(k)<E_{n}(k)<E_{n+1}(k).$$
 For  $E_0\in E_n(\mathbb T^*)$, we put
${\mathcal F}(E_0)=\{k\in \mathbb T^*:   \, E_n(k)=E_0\}$ \footnote{ When $E_0$ equals  the Fermi energy $E_F$, ${\mathcal F}(E_0)$   is part of the Fermi surface defined by  ${\mathcal F}_F:=\{ k\in \mathbb T^*;  E_F\in \sigma(H_0(k))\}$ (see \cite{LK}). Here $\sigma(H_0(k))$ denotes the spectrum of the operator $H_0(k)$.}. We assume that for every $k=(k_1, k_2, k_3)\in {\mathcal F}(E_0)$ with $k_3$ in an open interval, $E_n(k)$ is a simple eigenvalue of $H_0(k)$. Therefore, $k\mapsto E_n(k) $ is analytic  in a neighborhood of ${\mathcal F}(E_0)$, and 
we can  choose  $k\mapsto \Psi_{n_0}(\cdot,k)$ to be a real-analytic function near ${\mathcal F}(E_0)$.
Since we will use only
one band, we will suppress the index $n$ in  $E_{n}(k)$, $\Phi_{n}(\cdot, k)$ and $\Psi_{n}(\cdot,k)$. }

In classical discussions (\cite{o}) the de Haas-van Alfven effect is associated with the curves in pseudo-momentum space obtained as intersections of planes perpendicular to the (constant) magnetic field with the Fermi surface. Here we use the curves in 
$(y_1, y_2, p_1, p_2)$ corresponding to electron paths. These curves are
determined by the Peierls Hamiltonian $H(y, p) = E(p + \mu A)$. However, they are obtained from the pseudo-momentum space curves above by substituting $k_1 = p_1, k_2 = p_2 + \mu y_1$    and $k_3 = p_3$, and setting $y_2 =-\frac{p_2}{\mu}$
plus a constant (our constructions will be independent of $y_3$). To verify this we compute as follows. Let
$(y(s), p(s))$ be a trajectory for the Peierls hamiltonian
generated by $H(y,p)$ :
 \begin{equation}\tag{3}\label{EM}
 \dot y(s)=\frac{\partial H}{\partial p}=\frac{\partial E}{\partial k}(p(s)+\mu A(y(s))),
 \end{equation}

 \begin{equation}\tag{4}\label{EM0}
 \dot p(s)=-\frac{\partial H}{\partial y}=-\mu (\dot y_2(s), 0, 0). 
 \end{equation}
 {In the pseudo-momentum coordinate $k(s):=p(s)+\mu A(y(s))$ one has $$E(k(s))=E(k_1(s), k_2(s), k_3)=E(k(0))=E_0,$$
 and
 \begin{equation}\tag{5}\label{EM1}
 \dot y(s)=\frac{\partial E}{\partial k}(k(s)),\,\,  \dot k(s)=\mu (- \dot y_2(s), \dot y_1(s), 0).
 \end{equation}
 
  Let $z_1=p_2+\mu y_1$. Since $\dot p_2=0$, we have 
   $$
 \dot z_1=\mu \frac{\partial E}{\partial k_1}(p_1,z_1,p_3) \text { and }  \dot p_1=-\mu\frac{ \partial E}{\partial k_2}(p_1,z_1,p_3).
 $$
  Hence,  since $\dot p_3=0$,  $(z_1(s),p_1(s))$  moves along a level curve for $E(p_1,z_1,k_3)$.
 We also have
  \begin{equation}\tag{6}\label{EM2}
 \dot y_2=\partial_{p_2}E(p_1,p_2+\mu y_1,p_3)=-\dot p_1/\mu.
 \end{equation}
Thus there are two families of trajectories here. First there the level curves
$$
\gamma=\gamma(k_3,E_0)=\{(k_1,k_2): E(k_1,k_2,k_3)=E_0\}.
$$
Second there are the trajectories (with $y_3$ deleted) for the Peierls Hamiltonian
  \begin{equation}\tag{7}\label{XXX00}
\hat \gamma=\hat \gamma(k_3,E_0,c)=\{(y_1,y_2,p_1,p_2):E(p_1,p_2+\mu y_1,k_3)=E_0,\  y_2=-p_1/\mu +c\}.
\end{equation}
Note that $\gamma (k_3,E_0)$ will become $\hat \gamma(k_3,E_0, c)$ when one substitutes $k_1=p_1$ and $k_2=p_2+\mu y_1$ and sets $y_2=-p_1/\mu +c$. $\hat \gamma$ will be used in the  sections Generalized Onsager Relation.

 


Throughout this paper we assume that :  
  \begin{equation}\tag{H1}\label{ASSU}
(\frac{\partial E}{\partial k_1},  \frac{\partial E}{\partial k_2})\not =(0,0) \text { on } {\gamma}(k_3, E_0),
 \end{equation}
 and
   \begin{equation}\tag{H2}\label{ASSU2}
 {\gamma}(k_3, E_0) \text { is a simple closed curve.}
 \end{equation}
Assumptions \eqref{ASSU} and \eqref{ASSU2} insure that  $\{(k_1, k_2) : E(k_1,k_2, t_3)=E\}$  are simple and closed for $\vert E-E_0\vert $ and $\vert t_3-k_3\vert$ small enough and that it depends smoothly on $(t_3, E)$.

 We let $S(k_3)$  denote the area in $k$-space   enclosed by $\gamma(k_3, E_0)$.   $\gamma(k_3, E_0)$ is the projection of a helical orbit $\hat\Gamma(k_3)$ of the
full hamiltonian in  \eqref{EM}, \eqref{EM0}.}

\section{Gaussian Beam Construction}

\subsection {Two scale expansions method, eikonal and transports equations}

In what follows, $\hat y$ (resp. $\hat x$, $\hat k$) denotes $(y_1, y_2)$ (resp. $(x_1, x_2)$, $ \hat k=(k_1, k_2)$).
With the change of variable $\hat y=\epsilon \hat x$, the operator $P$  is unitarily equivalent to
 $$\tilde P={\hbar^2\over 2m}\Big[(\epsilon D_{\hat y}+\mu A(\hat y))^2+D_{x_3}^2\Big] + eV(\frac{ \hat y}{\epsilon}, x_3).$$
 Here we  are looking  for a  solution 
 of   the equation 
 \begin{equation}\tag{8}\label{Eq}
 { (\tilde P-E_0)u(\hat y, x_3, \epsilon)\equiv 0}.
 \end{equation}
 of the form
$$u(\hat y,x_3, \epsilon)=e^ {{i}(\phi(\hat y)/\epsilon +x_3 k_3)} m(\frac{\hat y}{\epsilon }, x_3,  \hat  y; \epsilon).$$
  In order to accomplish this we use, as in \cite {Bu, DGR1, GRT},  the two-scale expansion method in which the  coordinate $\hat x$ and the slowly
varying
 space variable $\hat y=\epsilon \hat x$ are 
 regarded as independent variables. Thus, we  consider the following equation in the independent variables
$\hat x$ and $\hat y$ : 
 \begin{equation}\tag{9}\label{XX0}
({\mathbb P}-E_0)v=0,
\end{equation}
with
 \begin{equation}\tag{10}\label{XX01}
\mathbb P={\hbar^2\over 2m}\Big[(\epsilon D_{\hat y}+D_{\hat x}+\mu A(\hat y))^2+(D_{x_3}+k_3)^2\Big] + eV(\hat x, x_3)
\end{equation}
Note that,
if  
 $v(\hat x, x_3, \hat y,\epsilon)$ is a solution of \eqref{XX0}, 
   then  $u=e^{ix_3 k_3}v(\frac{\hat x}{\epsilon}, x_3, \hat y,\epsilon)$ is a 
solution of \eqref{Eq}. In the variable $x=(\hat x, x_3)$, $v(x, \hat y, \epsilon)$ is required 
to be periodic.

We look  for approximate solution to \eqref{XX0}, which have the form :
 \begin{equation}\tag{11}\label{XX02}
v(\hat x, x_3, \hat y, \epsilon)=
e^{i\phi(\hat y)/\epsilon} \Bigl[m_0(\hat x, x_3, \hat y)+\epsilon m_1(\hat x, x_3, \hat y)+\cdots+\epsilon^Nm_N(\hat x, x_3, \hat y)\Bigr].
\end{equation}
Now substituting \eqref{XX02}  into \eqref{XX0} and collecting terms which are the same order in $\epsilon$, we get
\begin{equation}\tag{12}\label{XX03}
(\mathbb P-E_0)v=e^{i\phi(\hat y)/\epsilon} \Bigl[c_0(\hat x, x_3, \hat y)+\epsilon c_1(\hat x, x_3, \hat y)+\cdots+\epsilon^{N+2}c_{N+2}(\hat x, x_3, \hat y)\Bigr] \end{equation}
where 
\begin{equation}\tag{13}\label{XX04}
 c_0(\hat x, x_3, \hat y)=\left[ H_0\left(K( \hat y)\right)
-E_0\right]m_0,\end{equation}
\begin{equation}\tag{14}\label{XX05}
c_1(\hat x, x_3, \hat y)= \left[ H_0
\left( K(\hat y)\right)-E_0\right]m_1-\frac{\hbar^2}{2m}\mathbb Km_0,\end{equation}
and for $j=2,3,...,N+2$
$$
c_j(\hat x, x_3, \hat y)=\left[ H_0
\left(K(\hat y)\right)-E_0\right]m_j-\frac{\hbar^2}{2m} \Big(\mathbb K m_{j-1}+\Delta_{\hat y} m_{j-2}\Big).$$
Here
$$\mathbb K=i\Big[
{\partial H_0\over \partial \hat k}( K(\hat y))\cdot{\partial \over \partial \hat y}+
\Delta_{\hat y} \phi\Big]$$
and 
$$  K(\hat y)=({\partial \phi
\over\partial \hat y}(\hat y)+\mu A(\hat y), k_3).$$

When $\phi$ is real-valued, \eqref{XX02}  is the standard ansatz of the WKB-method. In this case, to solve the equation \eqref{XX0}, one requires that $c_j=0, j=0, 1, 2, \cdots$ :

 \begin{equation}\tag{E}\label{Eik}
c_0=\left[ H_0\left(K( \hat y)\right)
-E_0\right]m_0=0, \: \:\:\ ({\rm eikonal \:   equation})
\end{equation}
\begin{equation}\tag{$T_1$}\label{TraT1}
 c_1=\left[ H_0
\left( K(\hat y)\right)-E_0\right]m_1-\frac{\hbar^2}{2m}\mathbb Km_0=0,\,\,\, ({\rm transport \: equation }\,\, T_1)
\end{equation}
 \begin{equation}\tag{$T_j$}\label{Tran}
  c_j(\hat x, x_3, \hat y)=0, \:\: ({\rm transport \: equation } \,\, T_j). 
  \end{equation}
  
  \subsection{ Construction of the phase function by the Gaussian beam method}
 
  According to \eqref{Eik}, equation  tells us that for all $\hat y$, $m_0(\hat x, x_3,  \hat y) $ is an eigenfunction of  $H_0(K(\hat y))$
with eigenvalue $E_0$. Therefore, we can fulfill \eqref{Eik} by choosing
 \begin{equation}\tag{$E_\phi$}\label{Eikon}
E({\partial \phi
\over\partial \hat y}(\hat y)+\mu A(\hat y), k_3)=E_0,
\end{equation}
$$m_0(\hat x, x_3, \hat y)=f_0(\hat y) \Phi(x, K(\hat y)).$$
The phase $\phi$ is derived from the  classical hamiltonian
$$\widehat H(\hat y,  \hat p):=\widehat H(y_1, y_2, p_1, p_2)=E(p_1, p_2+\mu y_1, k_3).$$

By definition (see \eqref{XXX00}),  $ \hat\gamma$ is a periodic orbit for the Hamiltonian system
\begin{equation}\tag{15-0}\label{EQ0}
\dot y=\widehat H_p,\ \dot p=-\widehat H_y.
\end{equation}
Along $\hat\gamma$,   $\phi$ satisfies $(p_1(s), p_2(s))=\phi_y(y(s))$ with
\begin{equation}\tag{15-1}\label{EQ}
E(k(s))=E_0,\,\,\, k(s)=(p_1(s), p_2(s)+\mu y_1(s),k_3)=(\phi_{\hat y}(\hat y(s))+\mu A(\hat y(s)),k_3).
\end{equation}

 The nonlinearity of the Hamilton-Jacobi equation \eqref{Eikon} generally leads to finite time singularity formation in phase $\phi$ "caustic problem", and the transport equations then become undefined. The caustic problem has been addressed in many works, starting with Keller, Maslov, and H\"ormander, using the classical Fourier integral operator (FIO) approach \cite{Ho}. Us indicated in the introduction, instead of OIF we are going to use "Gaussian Beams" as in \cite{Ar, R1, R2}. This means that  we are not going to attempt to solve \eqref{Eikon}  exactly,  we only need to  build asymptotic solutions concentrated on a single ray $\hat \gamma$. The Gaussian profile is achieved by allowing the phase to be complex away from the ray so that the solution decays exponentially away from $\hat \gamma$.

 More precisely, fix  a periodic trajectory $\hat \gamma=\{(\hat y(s), \hat p(s));  s\in [0, T]\}$ of  the classical hamiltonian $\hat H$ with initial date $(\hat y(0), \hat p(0))$ satisfying $\hat H (\hat y(0), \hat p(0))=E_0$, and let $ \Pi_{\hat y} \hat \gamma$  denote  the projection of  $\hat \gamma$ on the $\hat y$-space. We are going to prove 

\begin{proposition}\label{Prop 1}
Assume \eqref{ASSU} and \eqref{ASSU2}.  There exists a smooth function $\phi$ such that
\begin{enumerate}
\item $\frac{\partial \phi}{\partial \hat y}( \hat y(s))=\hat p(s)$.
\item In a small neighborhood $\Omega$ of $\Pi_{\hat y}\hat \gamma$  \begin{equation}\tag{15}\label{XX00012}
   G(y):=E({\partial \phi
\over\partial \hat y}(\hat y)+\mu A(\hat y), k_3)-E_0={\mathcal O}_N\Big(d(\hat y, \Pi_{\hat y}\hat \gamma)^3\Big), 
   \end{equation}
\item $\Im \phi\geq C d(\hat y,   \Pi_{\hat y} \hat \gamma)^2$,
\end{enumerate}
where $d(\hat y,   \Pi_{\hat y} \hat \gamma)$ is the  distance from $\hat y$ to $ \Pi_{\hat y} \hat \gamma$.
\end{proposition}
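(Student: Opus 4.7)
The plan is to construct $\phi$ as a quadratic Taylor polynomial along $\Pi_{\hat y}\hat\gamma$. Parametrize the tubular neighborhood by $s=s(\hat y)$, the nearest-point projection onto the curve, and set
\begin{equation*}
\phi(\hat y) = \Phi_0(s) + \hat p(s)\cdot(\hat y - \hat y(s)) + \frac{1}{2}(\hat y - \hat y(s))^T M(s)(\hat y - \hat y(s)),
\end{equation*}
with $\Phi_0(s) = \int_0^s \hat p(\sigma)\cdot \dot{\hat y}(\sigma)\,d\sigma$ the classical action and $M(s)$ a complex symmetric $2 \times 2$ matrix to be determined. Condition (\textit{i}) is then built into the linear term, and differentiating that identity along the orbit yields the compatibility relation $M(s)\dot{\hat y}(s) = \dot{\hat p}(s)$.

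The eikonal reads $G(\hat y) = \hat H(\hat y,\phi_{\hat y}(\hat y))-E_0$, where $\hat H(\hat y,\hat p)=E(p_1,p_2+\mu y_1,k_3)$. It vanishes on $\Pi_{\hat y}\hat\gamma$ because $\hat\gamma\subset\{\hat H=E_0\}$, and a short computation using Hamilton's equations $\dot{\hat y}=\hat H_p$, $\dot{\hat p}=-\hat H_y$ together with the compatibility relation shows that $\nabla G\equiv 0$ on $\Pi_{\hat y}\hat\gamma$. Differentiating once more and evaluating on the curve, the requirement $\nabla^2 G\equiv 0$ there becomes the matrix Riccati equation
\begin{equation*}
\dot M + M\hat H_{pp}M + M\hat H_{py} + \hat H_{yp}M + \hat H_{yy} = 0,
\end{equation*}
with the Hessians of $\hat H$ evaluated at $(\hat y(s),\hat p(s))$. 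One checks that the Riccati flow preserves both the symmetry $M^T=M$ and the compatibility $M\dot{\hat y}=\dot{\hat p}$ (the defect satisfies a homogeneous linear ODE), so these constraints need only be imposed at $s=0$, and condition (\textit{ii}) then follows.

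Following Ralston, I would solve the Riccati equation via the substitution $M = ZY^{-1}$, where $(Y,Z)$ solves the variational Hamiltonian system $\dot Y = \hat H_{py}Y + \hat H_{pp}Z$, $\dot Z = -\hat H_{yy}Y - \hat H_{yp}Z$. If $(Y(0),Z(0))$ is chosen to span a complex Lagrangian subspace with $\Im(Z(0)Y(0)^{-1}) > 0$, then conservation of the Hermitian symplectic form $Y^{*}Z - Z^{*}Y$ under the flow keeps $Y(s)$ invertible and $\Im M(s) > 0$ for all $s$. Condition (\textit{iii}) then follows from $\Im\phi(\hat y) = \frac{1}{2}(\hat y-\hat y(s))^T \Im M(s)(\hat y-\hat y(s))$ and the identity $|\hat y-\hat y(s)|=d(\hat y,\Pi_{\hat y}\hat\gamma)$ provided by the tubular projection.

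The main obstacle will be to arrange $M(T)=M(0)$, so that $\phi$ is genuinely single-valued on a neighborhood of the closed curve $\Pi_{\hat y}\hat\gamma$. This reduces to producing a complex Lagrangian subspace that is invariant under the monodromy of the variational system and on which the associated $\Im M$ is positive. In the elliptic case, such subspaces are supplied by eigenspaces of the Poincar\'e map corresponding to non-real multipliers; since here $\hat\gamma$ is unstable and all Poincar\'e multipliers are real, this direct Floquet route is blocked, and a more delicate construction of the invariant complex Lagrangian -- along the lines announced in the introduction as an adaptation of Ralston's method -- will be required.
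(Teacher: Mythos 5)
Your setup (quadratic jet of $\phi$ along $\hat\gamma$, the compatibility relations $\dot\Phi_0=\hat p\cdot\dot{\hat y}$, $M\dot{\hat y}=\dot{\hat p}$, the Riccati equation from $\nabla^2G=0$ on the curve, and the reduction $M=ZY^{-1}$ via the variational system) is exactly the framework the paper uses. But you stop at the one step that is the actual content of the proposition: producing a solution $M(s)$ of the Riccati equation that is \emph{periodic}, $M(T)=M(0)$, with $\Im M$ positive transversally, in the situation where the linearized Poincar\'e map has no nonreal multipliers. Your last paragraph defers this ("a more delicate construction \dots will be required"), so the proof is not complete. The paper closes this gap by exploiting the specific geometry forced by (H1)--(H2): near $\hat\gamma$ the three-dimensional energy surface of $\hat H$ is foliated by closed orbits, so \emph{every} nearby orbit is periodic. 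Differentiating the return-time identities $w_j(T_j(s),s)=w_j(0,s)$ for curves of initial data lying on these periodic orbits shows that the linearized Poincar\'e map sends any solution datum $v(0)$ to $v(0)+a\,v_1(0)$, where $v_1$ is the tangent solution $(\dot y,\dot p)$. Hence the complex span $S(s)$ of $v_1(s)$ and \emph{any} second solution $v_2(s)$ automatically satisfies $S(T)=S(0)$; one is then free to pick $v_2(0)$ (e.g.\ $v_2(0)=(0,i\dot y_1(0),-\dot y_2(0),\dot y_1(0))$) so that $\sigma(v_2,v_1)=0$ and $\sigma(v_2,\overline{v_2})=ic$, $c>0$, and the appendix (Proposition \ref{Prop A1}) gives invertibility of $Y(s)$, symmetry of $M=NY^{-1}$, the Riccati equation, transverse positivity of $\Im M$, and $M(T)=M(0)$ from $S(T)=S(0)$. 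Without this observation (or some substitute), conditions (\textit{ii})--(\textit{iii}) together with single-valuedness of the $2$-jet of $\phi$ around the closed curve are simply not established.

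A smaller but genuine inconsistency: you ask for $\Im\bigl(Z(0)Y(0)^{-1}\bigr)>0$ in the definite sense, while simultaneously imposing $M(0)\dot{\hat y}(0)=\dot{\hat p}(0)$ with both vectors real; the latter forces $\Im M(0)\,\dot{\hat y}(0)=0$, so $\Im M(0)$ can never be positive definite on all of $\mathbb R^2$. What can (and must) be arranged is positivity on the orthogonal complement of $\dot{\hat y}(s)$, which is how the paper states it, and which still yields (\textit{iii}) because the nearest-point displacement $\hat y-\hat y(s)$ is orthogonal to the curve. Relatedly, note that $\Phi_0(s)=\int_0^s\hat p\cdot\dot{\hat y}\,d\sigma$ is not single-valued around $\hat\gamma$ (its increment is the action $S(k_3)/\mu$); this is harmless here and is precisely what later produces the Onsager relation, but it should be acknowledged rather than folded into the claim that $\phi$ is a globally defined smooth function.
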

  \begin{proof}
  
  The proof is adapted from  \cite{R2}.  For this reason we omit some details. 
  Here,  the main difficulty in carrying out the construction in \cite{R2} is that
 all orbits for $\widehat H$ near $\hat\gamma$ are periodic,  hence the algebraic eigenvalues of 
  the linearized Poincare map $\mathcal P$
   are all $1$. This is not allowed in the construction of quasi-modes in \cite{R2} and elsewhere. 
   
   Recall that $k_3$  is fixed, and  the classical hamiltonian $\widehat H(y_1, y_2, p_1, p_2)=E(p_1, p_2+\mu y_1, k_3)$ is independent on $y_3$. By abuse of notation, we write $y$, $y(s)$,  $p$, $H$ instead of $\hat y=(y_1,y_2)$, $\hat y(s)=(y_1(s), y_2(s))$, $\hat p=(p_1, p_2)$,  and $\widehat H$.
   
   Requiring that $G(y)=H(y, {\partial \phi\over\partial  y}( y))-E_0$ vanishes to zero, first  and second order on $\hat \gamma$, we get :
   \begin{enumerate}[label={\arabic*.}]
   \item  $\frac{\partial \phi}{\partial y}( y(s))= p(s)$,
   \item $\frac{ \partial G}{\partial y_j}=\frac{\partial H}{\partial y_j}+\sum_{l=1,2} \frac{\partial H}{\partial p_l} \frac{\partial^2\phi}{\partial_{y_j}\partial_{y_l}}=0$, for $j=1,2$,
\item   $\frac{ \partial ^2G}{\partial y_i\partial y_j}=\frac{\partial^2 H}{\partial y_i\partial y_j}+\sum_{l=1,2} \Big(\frac{\partial^2\phi}{\partial_{y_i}\partial_{y_l}}\frac{\partial^2H}{\partial p_l\partial y_j}
+\frac{\partial^2H}{\partial y_i\partial p_l}
\frac{\partial^2\phi}{\partial_{y_l}\partial_{y_j}}+\frac{\partial H}{\partial p_l}\frac{\partial^3\phi}{\partial y_l\partial y_i \partial y_j}\Big)=0$ for $i,j=1,2$,
 where the equalities are evaluated    along $\hat \gamma$.  
  \end{enumerate}
  
 It's worth noting that  1. is a consequence  of the  uniqueness of the solution of \eqref{EQ0} with  initial condition. While  2. is only the compatibility condition $\dot p(s)=\frac{d}{ds}\Big(\frac{\partial \phi}{\partial y}(y(s))\Big)$. Let us investigate the third condition.
 
 Introducing the matrix $$\left(M(s)\right)_{ij}=\frac{\partial^2\phi}{\partial_{y_i}\partial_{y_j}}(y(s)), \,\,\,\,\,\,\,\,\,\, \left(A(s)\right)_{ij}=\frac{\partial^2 H}{\partial_{y_i}\partial_{y_j}}(y(s), p(s)),$$
 $$
 \left(B(s)\right)_{ij}=\frac{\partial^2 H}{\partial_{p_i}\partial_{y_j}}(y(s), p(s)), \,\,\,\,\,\,\,\,\,\,\,\,\,\,\left(C(s)\right)_{ij}=\frac{\partial^2H}{\partial_{p_i}\partial_{p_j}}(y(s), p(s)),$$
   and using \eqref{EQ0}, one can rewrite the condition 3)  as the non-linear Ricatti matrix equation
     \begin{equation}\tag{RE}\label{Ricatti}
   \frac{dM}{d s}+MCM+MB+B^TM+A=0.
   \end{equation}
    
   Thus in order to prove Proposition 1,  we need to construct a phase $\phi$ such the  the matrix $M(s)$ satisfies \eqref{Ricatti} for all $s$ with
   \begin{itemize}
   \item $M(s)^T=M(s)$,
   \item $M(s+T)=M(s)$,
   \item $M(s) \dot y(s)=\dot p(s)$,
   \item  $\Im M(s)$ is positive definite on the orthogonal complement of $\dot y(s)$,
   
   \end{itemize}
   Some well known facts about the non-linear Ricatti equation \eqref{Ricatti}  are recalled in an appendix.

     Next, let us introduce the linearized equation about $\hat \gamma$ 
   \begin{equation}\tag{16}\label{A3}
 \left \{
\begin{array}{lr}
\dot {\delta y}= C(s) \delta  p +B(s) \delta y\\
\\
\dot{\delta p}=-B^T(s)\delta p- A(s)\delta y,
\end{array}
\right .
\end{equation}
 and the linearized Poincar\'e map $\mathcal P$ taking the data of solutions to \eqref{A3} at $s=0$ to their data at $s=T$, (i.e., ${\mathcal P} : (\delta y(0), \delta p (0))\rightarrow (\delta  y(T), \delta p(T))$).

To  apply the arguments from \cite{R2},  we need two vector solutions of \eqref{A3}, $v_1(s)$ and $v_2(s)$, where $v_1(s)$ is the tangent to $\hat \gamma$, i.e.  $v_1=(\dot y_
1,\dot y_2, \dot p_1,\dot p_2 (=0))$ which will satisfy \eqref{A3}  because it is the derivative of the flow  of $ H$ with respect to $s$.

Since we have assumed that $v_1(s)$ is never zero, $v_2(s)$ must satisfy three conditions. Letting $\sigma ((y,\eta),(w,\zeta))= y\cdot \zeta - w\cdot \eta$ be the symplectic two form, we need :
\begin{enumerate}
\item
  $ \sigma(v_2(s),v_1(s)) =0$ for all $s$.
\item 
$\sigma(v_2(s),\overline{v_2(s)})=ic$ with $c> 0$ for all $s$.
\item The complex span of $v_1(s)$ and $v_2(s)$, $S(s)$ should be periodic, i.e. $S(0)=S(T)$.
\end{enumerate}

Since $\sigma$ is constant on pairs of solutions to \eqref{A3} (see \eqref{A34}), the equalities i) and ii) will hold for all $s$, if they hold for $s=0$. Condition iii) makes this construction possible in our special case. The fact that all orbits near $\hat \gamma=\hat\gamma(E_0, k_3)$   are periodic means that for any solution $v(s)$ of \eqref{A3}, $\mathcal P$ maps $v(0)$ to $v(0)+a v_1(0)$ for some $a$. 
To see this choose curves $w_1(s)$ and $w_2(s)$ in $(y,p)$-space such that $w_1(0)=w_2(0)$ is on $\gamma$, and $\dot w_1(0)+i\dot w_2(0)=v_2(0)$.  Recalling that the
level surface  $\{(y, p), \, \hat H(y,p)=E\}$ is $3$-dimensional for  $\vert E-E_0\vert$ small enough, due
to the assumption \eqref{ASSU}. 
Since all orbits are periodic,   $w_1(s)$ and $w_2(s)$ are on periodic orbits,  $w_1(t,s)$ and $w_2(t,s)$ respectively. Note that $w_1(t,0)$ and $w_2(t,0)$ parametrize $\gamma$ and hence there are functions $ T_1(s) $ and $T_2(s)$ such that
\begin{equation}\tag{16-0}\label{GE0}
w_1(T_1(s),s)=w_1(0,s) \hbox{ and } w_2(T_2(s),s)=w_2(0,s).
\end{equation}

Evaluating the derivatives of the equations in \eqref{GE0}  with respect to $s$ at $s=0$, we have (since $T_j(0)=T$)

$$\partial_s w_j(0,0)=\partial_s w_j(T,0)+\partial_s T_j(0)\partial_t w_j(T,0),\  j=1,2, $$
which yields  $v_2(0)=v_2(T)+(\partial_sT_1(0)+i\partial_s T_2(0))v_1(T)$. 

Therefore,  $S(T)=S(0)$ for any choice of $v_2(s)$, and we  has plenty of freedom to choose $v_2(0)$ so that i) and ii) hold.
For instance, we could choose $(y(0),p(0))$ so that $\dot y_1(0)\neq 0$, and let  $v_2(s)=(\delta y_1^2(s), \delta y_2^2(s), \delta p_1^2(s), \delta p_2^2(s))$ with $v_2(0)=(0,i \dot y_1(0),-\dot y_2(0), \dot y_1(0))$. Let us introduce the $2\times 2$  matrices
$$Y(s):=(Y^1(s), Y^2(s))= \left(
\begin{matrix}
\dot y_1(s)& \delta y_1^2(s)\\
\dot y_2(s) & \delta y_2^2(s)
\end{matrix}
\right),
 N(s):=(N^1(s), N^2(s))= \left(
\begin{matrix}
\dot p_1(s) & \delta p_1^2(s)\\
\dot p_2(s)  & \delta p_2^2(s)
\end{matrix}
\right).
$$
Since we assumed that $(\dot y_1(s),\dot  y_2(s))$ never vanishes (see \eqref{EM} and \eqref{ASSU}),  we conclude from
Proposition \ref{Prop A1}  that $Y(s)$ is invertible for all $s$, hence that $M(s)=N(s)Y(s)^{-1}$ is well defined for all $s$, and finally that $M(s)$ is a global solution of \eqref{Ricatti}.

Now,  the desired properties of  the matrix $M(s)$ are derived  from Proposition \ref{Prop A1} and the fact that $S(0)=S(T)$.
This completes the  proof of Proposition \ref{Prop 1}. 
    \end{proof}

  \subsection{Construction of the Amplitude}

  Constructing the principal term $m_0$ is all we need to do for the applications. We offer suggestions on how to create the other terms in a remark.
  
 In the following, we assume that $\phi$ has been chosen so that Proposition\ref{Prop 1} holds, and we let
 $m_0=f_0(y)\Phi(x,K(y))$. Therefore
 \begin{equation}\tag{17}\label{XXXXX1}
 c_0=f_0(y)G(y)\Phi(x, K(y)).
     \end{equation}
     We want to choose $m_0, m_1,\cdots$, so that the functions $y\rightarrow c_j(x,y)$ vanish on $\hat \gamma$ to given order uniformly on $x\in \mathbb T$. 
 
  First,  let us deal with $c_0$ and  $c_1$.
  For  simplicity of notation, we write $E, H_0$ and $\Phi$ instead of $E(K(\hat y)), H_0(K(\hat y))$ and $\Phi(x, K(\hat y))$ respectively.
  
  Since $x\rightarrow m_1(x,\cdot)$ is required to be $\Gamma$-periodic, it is natural to write
  \begin{equation}\tag{18}\label{XXXX05}
  m_1(x,y)=f_1(y)\Phi(x, K(y))+m_1^\perp(x,y),
  \end{equation}
  where
  $$\langle \Phi(\cdot , K(y)), m^\perp_1(\cdot, y)\rangle=0.$$
Substituting the above equalities into \eqref{XX05} and using \eqref{EF0} we obtain
\begin{equation}\tag{19}\label{XX05-5}
c_1= f_1(y) G(y) \Phi(x, K(y))-\frac{\hbar^2}{2m}\mathbb Km_0+\left[ H_0
\left( K(y)\right)-E_0\right]m_1^\perp.\end{equation}
 By the Fredholm alternative, the equation,  $c_1=0$,  is solvable for   $m_1$ if and only if
 $$ f_1(y) G(y) \Phi(\cdot, K(y))-\frac{\hbar^2}{2m}\mathbb Km_0\perp {\rm ker} \left[ H_0-E_0\right] \text { in } 
 L^2(\mathbb T),$$
 for all $y$, 
  i.e.,
 \begin{equation}\tag{20}\label{XX00007}
 iL(y):=\Big <\mathbb K(f_0\Phi), \Phi\Big>_{L^2(\mathbb T)}-\frac{2m}{\hbar^2} f_1(y)G(y)=0.
 \end{equation}
 In view of the definition of $\mathbb K$,  we have
 \begin{equation}\tag{21}\label{XX0004}
L(y)=\frac{\partial f_0}{\partial  y}\cdot \Big<
{\partial H_0\over \partial k} \Phi, \Phi\Big>+ b(y) f_0(y)-\frac{2m}{i\hbar^2} f_1(y)G(y),
\end{equation}
where
 \begin{equation}\tag{22}\label{XX00004}
b( y):=\Big<
{\partial H_0\over \partial  k} \cdot \frac{\partial \Phi}{\partial  y}, \Phi\Big>+\Delta_{ y} \phi.
\end{equation}
We conclude from  \eqref{EF0} that
 \begin{equation}\tag{23}\label{XX004}
\frac{\partial E}{\partial  k}\Phi+(E-H_0)\frac{\partial \Phi}{\partial  k}
=\frac{\partial H_0}{\partial  k}\Phi,\,\,\, 
 \end{equation}
 hence that
$$  \frac{\partial E}{\partial  k}=\Big<\frac{\partial H_0}{\partial k} \phi, \phi\Big>.$$
Differentiating the above  equality  with respect to $ y$, and noting that $\frac{\partial H_0}{\partial k}$ is self-adjoint, we get
 \begin{equation}\tag{24}\label{XX005}
\frac{ \partial}{\partial  y}\cdot   \frac{\partial E}{\partial  k}=\Big<\frac{\partial H_0}{\partial  k}  \frac{\partial \phi}{\partial  y}, \phi\Big>+\Big<\frac{\partial H_0}{\partial  k} \phi, \frac{\partial \phi}{\partial  y}\Big>+2\Delta_{ y}\phi =2\Re b( y).
\end{equation}
Inserting \eqref{XX004} in \eqref{XX00004} we obtain
  \begin{equation}\tag{25}\label{XX0006}
\Im b=\Im\Big\{\frac{\partial E}{\partial  k}\cdot \Big< \frac{\partial \Phi}{\partial  y}, \Phi\Big>\Big\}+\Im\Big\{\Big< \frac{\partial \Phi}{\partial  y},(E-H_0)\frac{\partial \Phi}{\partial  k}\Big>\Big\}.
\end{equation}
Using the fact that $ \Big<\frac{\partial \Phi}{\partial  k_j},(E-H_0)\frac{\partial \Phi}{\partial  k_j}\Big>$ is real,  as well as the fact that $\frac{\partial K_2}{\partial  y_1}-\frac{\partial K_1}{\partial y_2}=\mu$,  we deduce that
 $$\Im\Big\{\Big< \frac{\partial \Phi}{\partial  y},(E-H_0)\frac{\partial \Phi}{\partial  k}\Big>\Big\}
= \mu \Im  \Big<(H_0-E)\frac{\partial \Phi}{\partial  k_1}, \frac{\partial \Phi}{\partial  k_2}\Big>.$$
  On the other hand, the normalization of $\Phi$  ensures that $\frac{1}{i}\Big< \frac{\partial \Phi}{\partial  y}, \Phi\Big>$ is real. This,  together with \eqref{XX0004}, \eqref{XX005} and \eqref{XX0006} yields
    \begin{equation}\tag{26}\label{XX0009}
L(y)=\frac{\partial E}{\partial  k}\cdot \frac{\partial f_0}{\partial  y}+ {\mathcal A}(y) f_0( y)
-\frac{2m}{i\hbar^2} f_1(y)G(y),
\end{equation}
where
$${\mathcal A}(y):=\frac{1}{2} \frac{ \partial}{\partial  y}\cdot   \frac{\partial E}{\partial k}+i\Big(\mu \Im  \Big<\frac{\partial \Phi}{\partial  k_2},(E-H_0)\frac{\partial \Phi}{\partial  k_1}\Big>+
\frac{1}{i}\frac{\partial E}{\partial  k}\cdot\Big< \frac{\partial \Phi}{\partial  y}, \Phi\Big>\Big).
$$
It is worth recalling that we want to choose $f_0$ and $f_1$ such  that the right hand side of \eqref{XX05-5} vanishes on $\hat\gamma$ to given  order. For any multi-index $\alpha$ of length $l$ the equations $\partial_y^\alpha  L(y)=0$ along $\hat\gamma$ 
gives rise to linear ordinary differential equation 
for $\partial_y^\alpha f_0$ with  inhomogeneous terms depending  on  the derivatives  of $f_0$ and $f_1$ of  order  up to $l-1$ and $l-3$ respectively :
 \begin{equation}\tag{27}\label{XXXX1}
\frac{\partial E}{\partial  k}\cdot \frac{\partial }{\partial  y}(\partial_y^{\alpha}f_0)+ {\mathcal A}(y) \partial_y^\alpha f_0( y)+C(f_0,\cdots,\partial_y^{\beta}f_0, f_1,\cdots\partial_y^{\beta'}f_1)
=0,
\end{equation}
with $\vert \beta'\vert\leq l-3$, $ \vert \beta\vert\leq l-1$  and  $C$ is independent on $f_1$ for $l<3$ and $C=0$ for $\alpha=0$.
Here we have used \eqref{XX00012}. 

Since $\dot y(s)=\frac{\partial E}{\partial  k}$, it follows from \eqref{XXXX1} and  the definition of $A(y)$ that :
   \begin{equation}\tag{28}\label{XX0007}
   \frac{d}{d s} \Big [(\partial_y ^\alpha f_0)(y(s))\Big]+\frac{1}{2}\Big(\partial_{ y}\cdot \partial_{ k}E(\phi_{y}+\mu A( y), k_3)\Big)_{\vert  y(s)} (\partial_y^\alpha f_0)(y(s))+i\Big(\dot \theta_b+\dot\theta_{rw}\Big)(\partial_y^\alpha f_0)( y(s))
   \end{equation}
 $$+C(f_0,\partial_yf_0,\cdots,\partial_y^{\beta}f_0, f_1,\cdots\partial_y^{\beta'} f_1)_{\vert_{y=y(s)}}=0,$$
 where, the phase $\theta_b$ is known as the Berry phase and $\theta_{rw}$ is the Wilkinson-Rammal  phase :
 \begin{equation}\tag{29}\label{Phases}
\dot \theta_b =i\Big<\Phi(\cdot, k(s)), \dot \Phi(\cdot,k(s))\Big>, 
\end{equation}
and
 \begin{equation}\tag{30}\label{RW}
\dot\theta_{rw}=\Im  \Big< (H_0(k(s))-E_0)\frac{\partial\Phi}{\partial k_1}(\cdot,k(s)), \frac{\partial\Phi}{\partial k_2}(\cdot, k(s))\Big>.
\end{equation}

To solve \eqref{XX0007}  for the partial derivatives of $f_0$ on $\hat \gamma$ up to order $l$, we may assume  that $f_1$ vanish to order $l-2$ at all points of $\hat\gamma$.Thus,  the right hand side of \eqref{XX0007} is no longer dependent on $f_1$. Since the coefficient $C$ depends  on all the partials  of $f_0$ up to order $l-1$, we can solve \eqref{XX0007} recursively. By linearity  the  solution exists for $all$ $s$.
Thus, it suffices to prescribe $\partial_y^\alpha f_0(y(0))$ for  $\vert \alpha\vert =l$,  to get the 
${\rm l}^{{\rm th}}$ order partial derivatives of $f_0$ on the whole curve $\hat\gamma$. Therefore,  we may assume that $f_0$ is chosen so that for given $l\in \mathbb N$ the left hand side of \eqref{XX00007} vanishes to order $l$ on $\hat\gamma$, i.e., 
\begin{equation}\tag{31}\label{XXXX7}
 \Big <\mathbb K(f_0\Phi), \Phi\Big>={\mathcal O}\Big(d(y, \Pi_{ y}\hat \gamma)^{l+1}\Big).
 \end{equation}
 This completes the construction of $m_0$.

 \begin{remark}  To compute  the other terms $m_1, m_2, \cdots,$ we proceed as follows.
 Set
 \begin{equation}\tag{32}\label{XXXX8}
 m_1^\perp=\frac{\hbar^2}{2m}(H_0-E_0)^{-1}\Big(\mathbb Km_0-\langle \mathbb Km_0,\Phi\rangle \Phi\Big)=\frac{\hbar^2}{2m}(H_0-E_0)^{-1}\Big(\mathbb Km_0\Big) ,
 \end{equation}
 where $(H_0-E_0)^{-1}$  denotes the inverse which maps the orthogonal complement of  of $\Phi$  in $L^2(\mathbb T)$ into itself. According to \eqref{XXXX05} we need only to compute $f_1$.
  We recall that $f_1$ needs  to  vanish to order $l-2$ at all points of $\hat\gamma$, and $G$ satisfies \eqref{XX00012}. Combining this with \eqref{XX05},  \eqref{XX00007}, \eqref{XXXX7} and \eqref{XXXX8} we get
    \begin{equation}\tag{33}\label{XXXXXX10}
  c_1={\mathcal O}\Big(d(y, \Pi_{ y}\hat \gamma)^{l+1}\Big)
  \end{equation}
  uniformly on $x\in \mathbb T$.
Like $T_1$ ,  $T_2$ can be solved for $m_2=f_2\Phi+m_2^\perp$   if and only
  \begin{equation}\tag{34}\label{XX00010}
 \Big <\mathbb K(f_1\Phi), \Phi\Big>_{L^2(\mathbb T)}= \frac{2m}{\hbar^2} f_2(y)G(y)-\Big <\mathbb K(m_1^\perp), \Phi\Big>_{L^2(\mathbb T)}- \Big <\Delta_{ y}(m_0), \Phi\Big>_{L^2(\mathbb T)}.
 \end{equation}
 The left side of the above equality  is identical to that of \eqref{XX00007} with $f_1$  instead of $f_0$.  Therefore, as in the proof of 
 \eqref{XX0009}, 
  equation \eqref{XX00010} gives
     \begin{equation}\tag{35}\label{XX00011}
     L_1(y):=
      \end{equation}
    $$ \frac{\partial E}{\partial  k}\cdot \frac{\partial f_1}{\partial  y}+ {\mathcal A}(y) f_1( y)
-\frac{2m}{i\hbar^2 } f_2(y)G(y) +\frac{1}{i}\Big[\Big <\mathbb K(m_1^\perp), \Phi\Big>_{L^2(\mathbb T)}+ \Big <\Delta_{ y}(m_0), \Phi\Big>_{L^2(\mathbb T)}\Big]=0. 
 $$
   Notice that the terms inside $\Big[\cdots \Big]$  vanish to order $l-2$ on $\hat\gamma$.   This follows from  the definition of $m_0$ and $m_1^\perp$ and the fact   $f_0$ vanishes to order $l$ on $\hat\gamma$.
 
Remembering that   we want to  construct   $f_1$ satisfying \eqref{XX00011} with $f_1(y)={\mathcal O}\Big(d(y, \Pi_{ y}\hat \gamma)^{l-2}\Big)$. 

As in  the proof of \eqref{XX0007} (with $l-2$ instead of $l$), equation \eqref{XX00011} shows that   for any multi-index $\alpha$ of length $l-2$ the equality  $\partial_y^\alpha  L_1(y)=0$ along $\hat\gamma$  yields 
a  linear o.d.f.  with an inhomogeneous term, $F$,  involving $m_0$ and $m_1^\perp$ :
 \begin{equation}\tag{36}\label{XXXX0009}
\frac{\partial E}{\partial  k}\cdot \frac{\partial }{\partial  y}(\partial_y^{\alpha}f_1)+ {\mathcal A}(y) \partial_y^\alpha f_1( y)+C_1(f_1,\cdots,\partial_y^{\beta}f_1, f_2,\cdots\partial_y^{\beta'}f_2)
=F.
\end{equation}
We get a linear differential equation for the  $\vert \alpha\vert$-order partial derivates of $f_1$ on $\hat \gamma$ with an inhomogeneous term previously determined. We can now proceed analogously to the proof of $f_0$.

 \end{remark}
 
 Let $\hat \gamma=\{(\hat y(s), \hat p(s));  s\in [0, T]\}$ be the null bicharacteristic of  the classical Hamiltonian $\hat H$ with initial date $(\hat y(0), \hat p(0))$ satisfying $\hat H (\hat y(0), \hat p(0))=E_0$, and let $ \Pi_{\hat y} \hat \gamma$  denote  the projection of  $\hat \gamma$ on the $\hat y$-space. Let $\phi$ the phase given by 
 Proposition 1. Next, we use the   above construction for both  $m_0$ and $m_1$ with $l=0$, $f_0(\hat y(0))=1$ and $f_1(\hat y(0))=0$. Let $\tilde \Omega \subset\mathbb R^2$ be a small neighborhood of $ \Pi_y\hat\gamma$, and let $f\in C^\infty_0(\tilde \Omega)$ be equal one near $ \Pi_y\hat\gamma$. Put
 $$\tilde u(\hat y,x_3, \epsilon)=e^ {{i}(\phi(\hat y)/\epsilon +x_3 k_3)} f(\hat y)\Big(m_0(\frac{\hat y}{\epsilon }, x_3,  \hat  y)+\epsilon m_1(\frac{\hat y}{\epsilon }, x_3,  \hat  y)\Big).$$
  A small enough $\tilde \Omega$ can be selected to ensure that the function $f$ is well defined on the support of $\phi$.

We state our main result of this section  as follows :
  
    \begin{theorem}\label{Prin1}
        Assume  \eqref{ASSU} and \eqref{ASSU2}. There exists $\epsilon_0$ small enough such that the preceding construction gives  an approximate eigenfunction $\tilde u$  satisfying :
        \begin{itemize}
        \item  $\vert (\tilde P-E_0)\tilde u (\hat y, x_3, \epsilon)\vert \leq C \epsilon^{3/2},\,\, (\hat y, x_3)\in \mathbb R^3.$
        \item $x\rightarrow m_j(x,  \hat  y)$ is periodic.
        \item  $$\Vert (\tilde P-E_0)\tilde u\Vert_{L^2(\mathbb R^2_{y_1, y_1})}={\mathcal O}(\epsilon ^{3/2})\Vert \tilde u\Vert_{L^2(\mathbb R^2_{y_1, y_1})}.$$
        uniformly on $\epsilon\in ]0,\epsilon_0[$.
        \item Along $\hat\gamma$, we have
       \begin{equation}\tag{37}\label{XXXXXXX7}
       \tilde u(\hat y(t),x_3, \epsilon)=e^ {{i}(c(t)/\epsilon-\theta(t) +x_3 k_3)} \sqrt{\frac{{\rm det}(Y(0))}{{\rm det}(Y(t))}} \Phi(\frac{\hat y(t)}{\epsilon}, x_3,  K(\hat  y(t))+{\mathcal O}(\epsilon),
       \end{equation}
      where
      $$c(t)=\int_0^t \hat p(s) d\hat y(s),\,\,\, \theta(t)=\int_0^t  \dot \theta_b+\dot\theta_{rw}\,  ds.$$
        \end{itemize}

 \end{theorem}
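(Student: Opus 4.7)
The plan is to verify the four claims using the phase $\phi$ and amplitudes $m_0,m_1$ constructed above with $l=0$, $f_0(\hat y(0))=1$ and $f_1(\hat y(0))=0$. Periodicity of $x\mapsto m_j(x,\hat y)$ is immediate: $m_0=f_0(\hat y)\Phi(x,K(\hat y))$ is a product of a $\hat y$-function with the $\Gamma$-periodic Bloch factor, and $m_1^\perp$ is obtained via the resolvent of $H_0-E_0$ on the orthogonal complement of $\Phi$ in $L^2(\mathbb T)$, which preserves periodicity.

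For the pointwise bound, the two-scale correspondence applied to $\tilde u=e^{ix_3k_3}v(\hat y/\epsilon,x_3,\hat y,\epsilon)$ with $v=e^{i\phi/\epsilon}f(\hat y)(m_0+\epsilon m_1)$ gives
\begin{equation*}
(\tilde P-E_0)\tilde u=e^{i(\phi/\epsilon+x_3k_3)}\bigl[f(\hat y)(c_0+\epsilon c_1+\epsilon^2 c_2)+R_f\bigr]\Big|_{\hat x=\hat y/\epsilon},
\end{equation*}
where $R_f$ collects the commutators of $\mathbb P$ with the cutoff. By Proposition \ref{Prop 1}(\emph{ii}), $c_0=f_0G\Phi=\mathcal{O}(d^3)$ with $d:=d(\hat y,\Pi_{\hat y}\hat\gamma)$; the choice $l=0$ in \eqref{XXXXXX10} gives $c_1=\mathcal{O}(d)$ uniformly in $x$; and $c_2$ is bounded. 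The Gaussian decay $|e^{i\phi/\epsilon}|\leq e^{-Cd^2/\epsilon}$ from Proposition \ref{Prop 1}(\emph{iii}), together with the elementary bound $d^ke^{-Cd^2/\epsilon}\leq C_k\epsilon^{k/2}$, makes each of the three main terms $\mathcal{O}(\epsilon^{3/2})$; the remainder $R_f$ is supported at positive distance from $\Pi_{\hat y}\hat\gamma$ where $\Im\phi\geq c_0>0$, so $R_f=\mathcal{O}(e^{-c_0/\epsilon})$. This establishes (\emph{i}). Integrating the same estimates in a tubular neighborhood parametrized by arclength $s\in[0,T]$ and transverse distance $d$, one-dimensional Gaussian integration yields $\|(\tilde P-E_0)\tilde u\|_{L^2}^2=\mathcal{O}(\epsilon^{7/2})$ and $\|\tilde u\|_{L^2}^2=\mathcal{O}(\epsilon^{1/2})$, so the ratio is $\mathcal{O}(\epsilon^{3/2})$.

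The delicate statement is \eqref{XXXXXXX7}. The action phase comes from integrating $\phi_{\hat y}\cdot\dot{\hat y}=\hat p\cdot\dot{\hat y}$ along $\hat\gamma$, so with $\phi(\hat y(0))=0$ one has $\phi(\hat y(t))=c(t)$. For the amplitude, restricting \eqref{XX0007} to $\hat\gamma$ with $\alpha=0$ (so $C\equiv 0$) and $f_0(\hat y(0))=1$ produces
\begin{equation*}
f_0(\hat y(t))=\exp\Bigl(-\tfrac{1}{2}\int_0^t\partial_y\!\cdot\!\partial_k E\,ds\Bigr)\,e^{-i\theta(t)}.
\end{equation*}
To identify the real factor with $\sqrt{\det Y(0)/\det Y(t)}$, I would differentiate the Jacobi matrix $Y$ via the variational system \eqref{A3}, obtaining $\dot YY^{-1}=CM+B$, and verify directly from $H(y,p)=E(p_1,p_2+\mu y_1,k_3)$ that $\operatorname{tr}(CM+B)=\partial_y\!\cdot\!\partial_k E$; Liouville's formula then supplies the claimed factor, while the correction $\epsilon m_1$ along $\hat\gamma$ contributes $\mathcal{O}(\epsilon)$.

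The main technical obstacle is this last Liouville-type identification, which depends crucially on the special form of the Peierls Hamiltonian: the contribution $\operatorname{tr}(B)=\mu\,\partial^2_{k_1k_2}E$ must exactly match the extra $\mu$-term that appears when one differentiates $\partial_kE$ at the shifted momentum $(p_1,p_2+\mu y_1,k_3)$. Once this identity is in place, the rest of the argument reduces to the routine Gaussian bookkeeping above.
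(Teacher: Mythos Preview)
Your proposal is correct and follows essentially the same route as the paper's proof: the error bounds on $c_0,c_1$ from Proposition~\ref{Prop 1} plus Gaussian decay give items (i)--(iii), and for \eqref{XXXXXXX7} the paper also reduces the transport equation \eqref{XX0007} with $\alpha=0$ to the identity $\partial_y\!\cdot\!\partial_kE|_{y(s)}=\operatorname{tr}(C(s)M(s)+B(s))=\operatorname{tr}(\dot Y Y^{-1})=\frac{d}{ds}\ln\det Y(s)$. One minor remark: the trace identity you single out as the ``main technical obstacle'' does not in fact rely on the special Peierls form; it is the general chain-rule computation $\partial_{y_j}\bigl(\partial_{p_j}H(y,\phi_y(y))\bigr)=(CM)_{jj}+B_{jj}$, valid for any $H$, combined here with $\partial_pH=\partial_kE$.
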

 
 \begin{remark}
 
 Our solution in \eqref{XXXXXXX7} involves the square root of the complex number 
 ${\rm det}(Y(s))$, and will be sensitive to the number of times the phase of  ${\rm det}(Y(s))$ wraps around the origin as $s$ goes from zero to $T$. If this winding number (or Maslov index) is called 
 $N_M$, we have by Cauchy's argument  principle :
 $$N_M=\frac{1}{2\pi i}\int_0^T \frac{\partial_s {\rm det }(Y(s))}{{\rm det}(Y(s))} ds.$$
 Therefore,
   \begin{equation}\tag{38}\label{XXXXX1111}
   \sqrt{\frac{{\rm det}(Y(0))}{{\rm det}(Y(T))}}=e^{-i\Theta_M},
   	\end{equation}
   	where $\Theta_M :=N_M\pi$ is the Maslov phase.

 Notice that, the phases in real, time-dependent WKB theory, i.e., the Maslov indices, change discontinuously at caustics and are not always easy to determine, whereas in the Gaussian beam method the phase is found by continuously following the quantity 
 ${\rm det}(Y(0))/{\rm det}(Y(t))$ and its square root along the orbit.

 \end{remark}

 \begin{proof}
   From \eqref{XX00012}, \eqref{XXXXX1} and \eqref{XXXXXX10},  we deduce
    \begin{equation}\tag{39}\label{XXXXX11}
    	c_0(\hat x, x_3, \hat y)={\mathcal O}\Big(d(y, \Pi_{ y}\hat \gamma)^{3}\Big)\,\,\, c_1(\hat x, x_3, \hat y)={\mathcal O}\Big(d(y, \Pi_{ y}\hat \gamma)\Big),
    \end{equation}
 uniformly on $x\in \mathbb T$. By Proposition 1, we have $\Im \phi\geq C d(\hat y,   \Pi_{\hat y} \hat \gamma)^2$. Hence
 $$d(\hat y,   \Pi_{\hat y} \hat \gamma)^m e^{i\phi/\epsilon}={\mathcal O}(\epsilon^{m/2}),$$
 which together with \eqref{XX03} and \eqref{XXXXX11}  produce the first three statements.

  By \eqref{XX0007}, we have
    \begin{equation}\tag{40}\label{XXXXXX7}
   \frac{d}{d s} \Big [f_0(y(s))\Big]+\frac{1}{2}\Big(\partial_{ y}\cdot \partial_{ k}E(\phi_{y}+\mu A( y), k_3)\Big)_{\vert  y(s)} f_0(y(s))+i\Big(\dot \theta_b+\dot\theta_{rw}\Big) f_0( y(s))=0.
   \end{equation}
We recall the $C(f_0,\partial_yf_0,\cdots,\partial_y^{\beta}=0, $ when $\alpha=0$. Next, by definition of $M(s)$, $B(s)$ and $C(s)$, we have : 
 $$ \partial_y\cdot \partial_kE(\phi_y+\mu A(y))|_{y(s)}={\rm tr}\Big(C(s)M(s)+B(s)\Big).$$
 From the first equation in \eqref{A33}, we deduce that $C(s)M(s)+B(s)=\dot Y(s) Y^{-1}(s)$. Combining this with the standard equality  $$ {\rm tr}(\dot Y(s)Y^{-1}(s))={d\over ds} \ln({\rm det}(Y(s))$$
 we obtain
$$(1/2)\partial_y\cdot \partial_kE(\phi_y+\mu A(y))|_{y(s)}=(1/2){d\over ds}\ln({\rm det}(Y(s)),$$ 
which together with \eqref{XXXXXX7} yields 
 \begin{equation}\tag{41}\label{TT001}
   \frac{d}{d s} \Big [f_0(\hat y(s))\Big]+\frac{1}{2}{d\over ds}\ln({\rm det}(Y(s)) f_0(\hat y(s))+i\Big(\dot \theta_b+\dot\theta_{rw}\Big)f_0(\hat y(s))=0.
 \end{equation}
Now \eqref{XXXXXXX7} follows from \eqref{XX02}, \eqref{XX03}, \eqref{Eikon} and \eqref{TT001}. We recall that $f_0(\hat y(0))=1$.

 \end{proof}

\section{Implications}


 \subsection{Generalized Onsager Relation.}
 
 { In physical space there  is a motion in the $y_3$-axis with velocity $\dot y_3(s)=\frac{\partial E}{\partial k_3}(k(s))$. Therefore,  the orbits $\hat\Gamma(k_3)$ of the
full hamiltonian in  \eqref{EM}, \eqref{EM0} are helical,  and do not support quasimodes, but their  projections  onto pseudo-momentum  produce resonances (called "magnetic energy levels" in the physics literature). {
 Onsager's key observation was that the magnetic energy levels determine $S(k_3)$ when it is extremal. Here we deduce that from the "resonance condition" that  the phase of the beam must increase by an integer  multiple of $2\pi$ when one goes around $\hat \gamma$. This means that $ \phi(y(s))/\epsilon+\Theta(s)$  increases by a multiple of  $2\pi$, where $\Theta(s)$   represents the combined contributions of the Berry, Wilkinson-Rammal  and Maslov phase shifts
 (see \eqref{Phases},  \eqref{RW} and Remark 3.)}

{ Let us compute the change in $\phi$ around  the periodic orbit  $\hat\gamma=\hat \gamma(k_3, E_0, c)$.  We denote $T$ its period.
From    \eqref{EQ},  $\phi_y(y(s))=p(s)$.  Since  $\dot p_2(s)=0$, it follows that
\begin{equation}\tag{42}\label{ST}
\dot \phi(y(s))=p(s)\cdot \dot y(s)=p_1(s)\dot y_1(s)+p_2(s) \dot y_2(s)= \frac{1}\mu p_1(s) [\dot p_2(s)+\mu \dot y_1(s)]+p_2(0) \dot y_2(s).
\end{equation}

 \vskip.2in
Recall that $\gamma (k_3,E_0)$ will become $\hat \gamma(k_3,E_0, c)$ when one substitutes $k_1=p_1$ and $k_2=p_2+\mu y_1$ and sets $y_2=-p_1/\mu +c$. Combining this with  the fact that
 $\int_0^T\dot y_2(s)ds=0, $ and using Green's  theorem we obtain from \eqref{ST}
$$\phi(y(T))-\phi(y(0))=\int_0^T p(s)\dot y(s)ds=\int_0^T  \frac{1}\mu p_1(s) [\dot p_2(s)+\mu \dot y_1(s)]ds$$
$$= \frac{1}{\mu}\int_{\hat \gamma} k_1dk_2=\frac{{\rm S}({k}_3)}{\mu}.$$}

Combining the above equality  with the resonance condition  and using  the fact that  that { $\mu \hbar =e$,}  we obtain
\begin{equation}\tag{43}\label{Onsa}
{ \frac{ \hbar  }{e\epsilon}{\rm S}({k}_3) -\Theta =2n\pi,}
\end{equation}

{ where $\Theta:=\Theta_b+\Theta_{rw}+\Theta_M$ now stands for the change in $\Theta(s)$ around $\hat\gamma$, i.e., 
$$
\Theta_b=-i\int_0^T \langle\Phi(\cdot,k(s)),\dot \Phi(\cdot, k(s))\rangle ds =-i\int_{\hat\gamma } \langle \Phi(\cdot,k),\partial_{k} \Phi(\cdot,k)\rangle dk,
$$
$$\Theta_{rw}=-\Im \int_0^T   \langle (H_0(k(s))-E)\frac{\partial\Phi}{\partial k_1}(\cdot,k(s)), \frac{\partial\Phi}{\partial k_2}(\cdot,k(s))\rangle \, ds.
$$
We recall that  $\dot \Phi(\cdot, k(s))=\dot k(s) \frac{\partial \Phi}{\partial k}(\cdot, k(s))$. Finally, according to \eqref{XXXXX1111}, 
$\Theta_M=0$
if $N_M$ is  even,  and $\Theta_M=\pi$ if $N_M$ is odd. Summing up, we obtain the generalized Onsager relation including the
Berry, Ramal-Wilkinson, and Maslov phases :


\begin{theorem}\label{Prin2} Assume (H1-2), and  let $S(k_3)$  denote the area in $k$-space   enclosed by $\gamma(k_3, E_0)$.
Therefore the quasi-classical quantization condition may be written in the form
\begin{equation}\tag{44}\label{ONSAGER}
\frac{ \hbar  }{e\epsilon}{\rm S}({k}_3)  =2\pi (n+\gamma)+\Theta_b+\Theta_{rw},
\end{equation}
where $\gamma=\frac{1}{2}$ or $0$.

\end{theorem}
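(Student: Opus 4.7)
My plan is to derive the quantization condition \eqref{ONSAGER} directly from the requirement that the Gaussian beam approximate eigenfunction $\tilde u$ constructed in Theorem \ref{Prin1} be single-valued along the closed orbit $\hat\gamma$. Theorem \ref{Prin1} produces $\tilde u$ as an $O(\epsilon^{3/2})$ approximate eigenfunction of $\tilde P$ concentrated in an $\epsilon^{1/2}$-tube about $\Pi_{\hat y}\hat\gamma$; in order for this local object to correspond to a genuine quasi-mode attached to the closed curve $\hat\gamma$, the total phase accumulated after one period $T$ must be an integer multiple of $2\pi$.

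First I would evaluate the explicit restriction \eqref{XXXXXXX7} at $t = T$ and $t = 0$. Because $\hat\gamma$ is $T$-periodic, $\hat y(T) = \hat y(0)$ and $K(\hat y(T)) = K(\hat y(0))$, so the Bloch factor $\Phi(\hat y(t)/\epsilon, x_3, K(\hat y(t)))$ returns to its starting value. Matching the two values of $\tilde u$ then forces the compatibility condition
\begin{equation*}
\frac{c(T)}{\epsilon} - \theta(T) - \Theta_M \equiv 0 \pmod{2\pi},
\end{equation*}
where $-\Theta_M$ arises by tracking the branch of $\sqrt{\det Y(0)/\det Y(T)}$ along the orbit, identified in \eqref{XXXXX1111} with $e^{-i\Theta_M}$ and $\Theta_M = N_M \pi$.

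Second I would compute $c(T) = \phi(y(T)) - \phi(y(0))$ following the chain of equalities in \eqref{ST}. Using $\dot p_2 \equiv 0$ along $\hat\gamma$ and $\int_0^T \dot y_2\, ds = 0$, one rewrites
\begin{equation*}
c(T) = \int_0^T p(s)\cdot \dot y(s)\, ds = \frac{1}{\mu}\int_0^T p_1(s)\bigl(\dot p_2(s) + \mu \dot y_1(s)\bigr)\, ds.
\end{equation*}
Under the change of variables $k_1 = p_1$, $k_2 = p_2 + \mu y_1$ from \eqref{XXX00}, which sends $\hat\gamma$ to the pseudo-momentum curve $\gamma(k_3, E_0)$, this equals $\frac{1}{\mu}\oint_{\gamma(k_3, E_0)} k_1\, dk_2$, and Green's theorem identifies it with $S(k_3)/\mu$. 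Recalling $\theta(T) = \Theta_b + \Theta_{rw}$ and $\mu\hbar = e$ yields $\frac{\hbar}{e\epsilon}S(k_3) = 2\pi n + \Theta_b + \Theta_{rw} + \Theta_M$, and absorbing $\Theta_M\in\{0,\pi\}$ into $2\pi\gamma$ with $\gamma\in\{0,1/2\}$ produces \eqref{ONSAGER}.

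The main obstacle is the phase-closure step itself: $\tilde u$ is an object built locally in an $\epsilon^{1/2}$-tube with no a priori global coherence around $\hat\gamma$, and the Maslov contribution is what encodes the possible ambiguity of the principal amplitude after one period. The subtlety is that one must track continuously the complex branch of $\sqrt{\det Y(s)}$, which is exactly the step where a standard WKB/Fourier integral operator argument would have to resolve caustics; here the Gaussian beam framework bypasses this difficulty because $Y(s)$ is globally defined and nonsingular, so the winding number $N_M$ of $\det Y(s)$ is unambiguously given by Cauchy's argument principle as stated in the preceding Remark.
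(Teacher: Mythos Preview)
Your proposal is correct and follows essentially the same route as the paper: impose single-valuedness of the beam around $\hat\gamma$, compute the action $c(T)=\int_0^T p\cdot\dot y\,ds$ via the manipulation in \eqref{ST}, convert to pseudo-momentum coordinates and apply Green's theorem to obtain $S(k_3)/\mu$, then identify the remaining phase as $\Theta_b+\Theta_{rw}+\Theta_M$ with $\Theta_M\in\{0,\pi\}$. If anything, you are slightly more explicit than the paper in invoking the formula \eqref{XXXXXXX7} of Theorem~\ref{Prin1} to justify the phase-matching step, and in writing the Green's-theorem integral over the pseudo-momentum curve $\gamma(k_3,E_0)$ rather than $\hat\gamma$.
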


\vskip.1in

\subsection*{Magnetic Oscillation.}
{ { The de Haas-van Alfven effect is associated with closed  curves $\gamma$ in pseudo-momentum space formed by intersecting the Fermi surface with planes
$k_3={k_3^0}$,  where the (constant) magnetic field is parallel to the $k_3$-axis. 
Not all choices for ${k}_3^0$ contribute to the de Haas-van Alfven effect. Only the "extremal" values
of ${k}_3^0$, i.e. those for which ${\partial S\over \partial k_3}=0$, contribute. { For each $n$ Onsager's relation \eqref{ONSAGER}  determines a magnetic energy level by giving a relation between $\epsilon$ and $S(k_3)$. }If $S(k_3^0) $ is extremal, as $k_3$ approaches $k_3^0$ smaller and smaller changes in $\epsilon$ are needed to satisfy Onsager's relation. Thus there is a peak in the density of magnetic energy levels at $k_3^0$.

 \vskip.4in
\section*{Appendix  A} \label{A}

   In this appendix, we recall some well  known facts about the non-linear Ricatti equation \eqref{Ricatti}. For reader convenience we sketch the proofs. To construct a solution of \eqref{Ricatti}, we start by choosing matrix solutions to the linear system
     \begin{equation}\tag{A1}\label{A33}
 \left \{
\begin{array}{lr}
\dot {\dot Y}= BY+CN \\
\\
\dot{N}=- AY-B^TN.
\end{array}
\right .
\end{equation}
   Since $A, B, C$  are well defined for all $s$, by linearity there exists a unique global solution $(Y(s), N(s))$ to the above system for any initial condition $(Y(0), N(0))$.
   
   Let $G_1(s)=(Y^1(s), N^1(s))$ and $G_2(s)=(Y^2(s), N^2(s))$ be two vectors solutions of \eqref{A33}. We recall that
    \begin{equation}\tag{A2}\label{A34}
    \sigma(G_1(s), G_2(s))=\sigma(G_1(0), G_2(0)).
    	\end{equation}
    	Since $\overline {G_2(s)}$ is also a solution of \eqref{A33}, the complexified form $\sigma_\mathbb C$ is also constant in $s$, i.e.,
\begin{equation}\tag{A3}\label{A35}
    \sigma_\mathbb C(G_1(s), G_2(s))= \sigma (G_1(s), \overline{G_2(s)})=\sigma_\mathbb C(G_1(0), G_2(0)).
    	\end{equation}
   
\begin{proposition} \label{Prop A1}
Let $M_0$ be a symmetric matrix such that $\Im M_0$ is positive definite on the orthogonal complement of $\dot y(0)=(\dot	y_1(0), \dot y_2(0))$, and $M_0\dot y(0)=\dot p(0)$. Let $(Y(s), N(s))$ be the solution of \eqref{A33} with initial condition $(Y(0), N(0))=(I, M_0)$. We have
\begin{itemize}
\item $(\dot y(s), \dot p(s))=(Y(s) \dot y(0), N(s) \dot y(0))$.
\item  	$Y(s)$ is invertible for all $s$.
\item $M(s):=N(s) Y(s)^{-1}$ is a solution of the Ricatti equation \eqref{Ricatti}.
\item $M(s)$ is a symmetric matrix for all $s\in \mathbb R$.
\item The matrix $\Im M(s)$ is positive definite on the orthogonal complement of $\dot y(s)=(\dot	y_1(s), \dot y_2(s))$.
\end{itemize}

\end{proposition}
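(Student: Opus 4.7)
My plan is to exploit the two conservation laws \eqref{A34} and \eqref{A35} for the symplectic form on pairs of solutions of \eqref{A33}. Applied column-by-column to $(Y(s), N(s))$ with initial data $(I, M_0)$, they translate to the matrix identities
\[
Y(s)^T N(s) - N(s)^T Y(s) \;=\; 0, \qquad \bar Y(s)^T N(s) - \bar N(s)^T Y(s) \;=\; 2i\,\Im M_0,
\]
valid for all $s \in \mathbb R$, where the first identity uses the symmetry of $M_0$. Once these are in hand, the five assertions follow by short arguments, the one substantial step being the invertibility of $Y(s)$.

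For the first bullet, both curves $s \mapsto (\dot y(s), \dot p(s))$ and $s \mapsto (Y(s)\dot y(0),\, N(s)\dot y(0))$ satisfy the linearized system \eqref{A3} --- the second because \eqref{A33} is just \eqref{A3} written as a matrix ODE --- and they agree at $s=0$ by the hypothesis $M_0 \dot y(0) = \dot p(0)$; uniqueness for a linear Cauchy problem closes this item. The third bullet is a direct computation: combining $\frac{d}{ds}Y^{-1} = -Y^{-1} \dot Y Y^{-1}$ with \eqref{A33} gives $\dot M = \dot N Y^{-1} - NY^{-1}\dot Y Y^{-1} = -A - B^T M - MB - MCM$, which is \eqref{Ricatti}. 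The fourth bullet reads off from the first identity above: $Y^T N = N^T Y$ yields $NY^{-1} = (Y^T)^{-1} N^T = (NY^{-1})^T$, hence $M = M^T$.

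The serious step is invertibility of $Y(s)$. Suppose $Y(s_0) u = 0$ for some nonzero $u \in \mathbb C^2$; apply $\bar u^T$ on the left and $u$ on the right to the second identity evaluated at $s = s_0$. The left-hand side becomes $\overline{Y(s_0) u}^T N(s_0) u - \overline{N(s_0) u}^T Y(s_0) u$, and both terms vanish since $Y(s_0) u = 0$ forces its conjugate to vanish as well. We thus obtain $\bar u^T \,\Im M_0\, u = 0$. Writing $u = a + ib$ with $a, b \in \mathbb R^2$ and using the symmetry of $\Im M_0$, this sesquilinear quantity equals $a^T \Im M_0\, a + b^T \Im M_0\, b$. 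The hypothesis $M_0 \dot y(0) = \dot p(0) \in \mathbb R^2$ forces $\Im M_0 \dot y(0) = 0$, and combined with the positive definiteness of $\Im M_0$ on the orthogonal complement of $\dot y(0)$, shows that $\Im M_0$ is positive semi-definite on $\mathbb R^2$ with kernel $\mathbb R\,\dot y(0)$. Hence $a, b \in \mathbb R\,\dot y(0)$, so $u \in \mathbb C\,\dot y(0)$, and then $Y(s_0) u$ is a complex multiple of $Y(s_0)\dot y(0) = \dot y(s_0)$, which is nonzero by \eqref{EM} and \eqref{ASSU}. Therefore $u = 0$, a contradiction. This is the only place the positivity hypothesis on $\Im M_0$ is really used.

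For the last bullet, we translate the second identity using $N = MY$ and $\bar N = \bar M \bar Y$ together with $\bar M^T = \bar M$ (valid by the symmetry already shown), which collapses to
\[
\bar Y(s)^T \,\Im M(s)\, Y(s) \;=\; \Im M_0.
\]
For $u \in \mathbb R^2$, setting $v = Y(s)^{-1} u$ rewrites $u^T \,\Im M(s)\, u$ as $\bar v^T \,\Im M_0\, v$, and the same $v = a+ib$ dissection as above shows this quantity is $\geq 0$ with equality iff $v \in \mathbb C\,\dot y(0)$. Tracing back via $u = Yv$, equality forces $u \in \mathbb C\,\dot y(s)$; since $u$ is real and $\dot y(s)$ is real and nonzero, $u$ must be a real multiple of $\dot y(s)$. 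This is exactly the statement that $\Im M(s)$ is positive definite on the orthogonal complement of $\dot y(s)$, completing the proposition.
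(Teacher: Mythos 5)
Your proof is correct and follows essentially the same route as the paper: conservation of the symplectic form and its complexification along solutions of \eqref{A33}, uniqueness for the linear Cauchy problem for the first item, the direct computation for the Riccati equation, symmetry of $M(s)$ from the real form, and invertibility of $Y(s)$ plus positivity of $\Im M(s)$ from the complexified form together with (H1). The only differences are presentational: you package the conservation laws as matrix identities rather than column-by-column, and you spell out details the paper leaves implicit, namely that $M_0\dot y(0)=\dot p(0)\in\mathbb R^2$ forces $\Im M_0\,\dot y(0)=0$, and the full argument for the last item, which the paper dismisses as ``almost identical'' to the invertibility step.
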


\begin{proof}
The first claim follows from the fact that 
both  $(Y(s) \dot y(0), N(s) \dot y(0))$  and $(\dot y(s), \dot p(s))$ are 
vector solutions of  \eqref{A33} as well as  the fact  that $(\dot y(0), \dot p(0))=(Y(0) \dot y(0), N(0) \dot y(0))$.

Suppose that $Y(s) a=0$ for some $a\in \mathbb C^2$. Since $(x(s), \xi(s)):=(Y(s)a, N(s)a)$ is a vector solution of \eqref{A33}, it follows from the constancy of complexified form $\sigma_\mathbb C$ that
$$0=\sigma_{\mathbb C}((x(s), \xi(s)), (x(s),\xi(s)))=\sigma((x(0), \xi(0)), \overline {(x(0), \xi(0))})$$
$$= \overline{x(0)}\cdot \xi(0)-x(0)\cdot \overline{\xi(0)}=2i \overline{a}\cdot \Im (M_0) a.$$
By definition  $\Im M_0$ is positive definite on the orthogonal complement of $\dot y(0)$. Combining this  with the above equality we deduce  that   $a=\beta \dot y(0)$ for some $\beta \in \mathbb C$. Consequently,
$$0=Y(s)a=\beta Y(s) \dot y(0)=\beta \dot y(s),$$
where we have used the first item. This gives the second item since $\dot y(s)\not =0$ for all $s$, due to the   assumption (H1).

To deal with the third item, notice that $$\frac{d}{ds}(NY^{-1})=\dot N Y^{-1}-N Y^{-1} \dot Y Y^{-1},	$$
which together with \eqref{A33} yields
$$\frac{d}{ds}(NY^{-1})=- A-B^T (NY^{-1}\big)-(NY^{-1})BY-(NY^{-1})C(NY^{-1}).$$
Hence $M(s)=NY^{-1}$ satisfies \eqref{Ricatti}.

We pass to   the fourth statement. Let $Y^i(s)$ (resp $N^i(s)$), $i=1, 2,$ denote the column vectors of $Y(s)$ (resp. $N(s)$), and let $G_i(s)=(Y^i(s), N^i(s))$. By construction of $M(s)$, we have $M(s) Y^i(s)=N^i(s)$.

From \eqref{A34}, we have
$$\sigma(G_i(s), G_j(s))=\sigma(G_i(0), G_j(0))=Y^j(s)\cdot M(s) Y^i(s)-Y^i(s)\cdot M(s)Y^j(s)
$$
$$=Y^j(0)\cdot M(0) Y^i(0)-Y^i(0)\cdot M(0)Y^j(0).$$
The right hand side of the last equation equals zero, since $M(0)=M_0$ is symmetric. Therefore,  for all $s\in \mathbb R$,
$$Y^j(s)\cdot M(s) Y^i(s)=Y^i(s)\cdot M(s)Y^j(s).$$
Notice that $(Y^1(s), Y^2(s))$ is a basis in $\mathbb C^2$, since $Y(s)$ is invertible for all $s$. Combining this with the above equality we deduce that $M(s)$ is symmetric.

The proof for the final item is almost identical to the proof of the second one, and relies on the conservation of the complexified form $\sigma_{\mathbb C}$.

\end{proof}


\begin{thebibliography}{10}

\bibitem{Ar}
Arnaud, J. A., {\it Hamiltonian theory of beam mode propagation} in E. Wolf (ed), Progress in Optics XI, North Holland, 1973, pp. 249--304.

\bibitem
{AM} Ashcroft, N.W.  and Mermin, N. D.,  {\it Solid State Physics,}  Saunders College (1965).

\bibitem {Ba}
Babich, V., {\it The higher-dimensional WKB method or ray method. its analogues and generalizations. In Partial Differential Equations V.} Springer, 1999, pp. 91--131.

\bibitem
{BR}
Bellissard, J. and Rammal, R.,   {\it An algebric semi-classical approach to Bloch electrons in a magnetic field, }
J. Physique France 51 (1990), 1803.

\bibitem
{Bu}
Buslaev, V. S.,  {\it Semi-classical approximation for equations with periodic coefficients,}  Russ. Math. Surv. 42 (1987), 97--125.

\bibitem {DGR1} Dimassi, M.,  Guillot, J-C and Ralston, J., {\it Semi-Classical Asymptotics in Magnetic Bloch Bands, } J. Phys. A:  Math. Gen. 35,(2002) pp. 
7597--7605.

\bibitem {DGR3} Dimassi, M., Guillot, J-C and Ralston, J., {\it Gaussian Beam Construction for Adiabatic Perturbations.}
Mathematical Physics, Analysis and Geometry (2006), pp. 187--201.



\bibitem {GRT} Guillot, J.-C, Ralston, J., Trubowitz  E., {\it  Semi-classical methods in solid state physics.}  Commun. Math. Phys. (1988),  116 401--15.

\bibitem {HS} B. Helffer, J. Sj\"ostrand,  {\it On diamagnetism and de Haas-van Alphen effect}.  
Annales de l' I. H.  P.,  section A,  tome 52,  no 4 (1990), p.  303--375.


\bibitem{Ho}  H\"ormander, L.,  {\it Fourier integral operator I, }Acta Math. 127 (1971), 79 Y 183.


\bibitem{LK} I. M. Lifshitz and  Moisei I  Kaganov, 
 {\it Some Problems of the Electron Theory of Metals. I  Classical and Quantum Theory of Electrons in Metals.} 1960  Sov. Physics Uspekhi 2,  831.
 
\bibitem{LRT} Liu, H., Runborg, O., and Tanushev, N. M. {\it Sobolev and max norm error estimates for Gaussian beam superpositions.}  Commun. Math. Sci. 14, 7 (2016), 2037--2072.

\bibitem {o} Onsager, Lars,  {\it  Interpretation of the de Haas-van Alfven Effect,}  The Londo  and Dublin Philosophical Mag. and J. of Science, (1952), pp. 1006--1008.

\bibitem{R1} J. V.  Ralston,  {\it Approximate Eigenfunctions of the Laplacian.}. J. Diff.Geometry Vol. 12 (1977), pp. 87--100.

\bibitem{R2}
J. V.  Ralston,  {\it  On the Construction of Quasi-modes Associated with Stable Periodic Orbits.}  Commun. in Math. Physics, vol. 51 (1976),pp. 219--242.



 \bibitem
{Pe} Peierls, R.,  {\it Zur Theorie des diamagnetimus von leitungselektronen,}  Z. Phys. 80 (1933),
763 Y 791.


\bibitem {Sj}  J. Sj\"ostrand,  {\it Density of states oscillations for magnetic Schr\"odinger operators.}  Mathematics in Science and Engineering
Volume 186,  1992,  Pages 295--345.

\bibitem{Mo}
C. L. Siegel,  J. Moser, {\it  Lectures on Celestial Mechanics. }Berlin, Heidelberg, New York. Springer
(1971).


\bibitem
{Sl}Slater, J. C.,  {\it  Electrons in perturbed periodic lattices, Phys. Rev.}  76 (1949), 1592--1600.

\bibitem {SN} Sundaram, G. and Niu, Q.,  {\it Wave packet dynamics in slowly perturbed crystals: Gradientcorrections and Berry phase effects,}  Phys. Rev. B 59 (1999), 14915--14925.

\end{thebibliography}
\end{document}